\newcommand{\keywords}[1]{\par\addvspace\baselineskip
\noindent\keywordname\enspace\ignorespaces#1}
\begin{document}

\mainmatter  % start of an individual contribution

% first the title is needed
\title{The Landscape of Minimum Label Cut (Hedge Connectivity) Problem}

% a short form should be given in case it is too long for the running head
\titlerunning{The Landscape of Minimum Label Cut (Hedge Connectivity) Problem}

% the name(s) of the author(s) follow(s) next
%
% NB: Chinese authors should write their first names(s) in front of
% their surnames. This ensures that the names appear correctly in
% the running heads and the author index.
%
\author{Rupei Xu \and Andr\'{a}s Farag\'{o}}
\authorrunning{Rupei Xu \and Andr\'{a}s Farag\'{o}}
% (feature abused for this document to repeat the title also on left hand pages)

% the affiliations are given next; don't give your e-mail address
% unless you accept that it will be published
\institute{The University of Texas at Dallas\\
\mailsa\\
\mailsb\\
}

%
% NB: a more complex sample for affiliations and the mapping to the
% corresponding authors can be found in the file "llncs.dem"
% (search for the string "\mainmatter" where a contribution starts).
% "llncs.dem" accompanies the document class "llncs.cls".
%

\toctitle{The Landscape of Minimum Label Cut (Hedge Connectivity) Problem}
\tocauthor{R.X and A. F}
\maketitle

\begin{abstract}
	
Minimum Label Cut (or Hedge Connectivity) problem is defined as follows: given an undirected graph $G=(V, E)$ with $n$ vertices and $m$ edges, in which, each edge is labeled (with one or multiple labels) from a label set $L=\{\ell_1,\ell_2, ..., \ell_{|L|}\}$, the edges may be weighted with weight set $W =\{w_1, w_2, ..., w_m\}$, the label cut problem(hedge connectivity) problem asks for the minimum number of edge sets(each edge set (or hedge) is the edges with the same label) whose removal disconnects the source-sink pair of vertices or the whole graph with minimum total weights(minimum cardinality for unweighted version). This problem is more general than edge connectivity and hypergraph edge connectivity problem and has a lot of applications in MPLS, IP networks, synchronous optical networks, image segmentation, and other areas. However, due to limited communications between different communities, this problem was studied in different names, with some important existing literature citations missing, or sometimes the results are misleading with some errors. In this paper, we make a further investigation of this problem, give uniform definitions, fix existing errors, provide new insights and show some new results. Specifically, we show the relationship between non-overlapping version(each edge only has one label) and overlapping version(each edge has multiple labels), by fixing the error in the existing literature; hardness and approximation performance between weighted version and unweighted version and some useful properties for further research. 

\keywords{Minimum Label Cut, Hedge Connectivity, Graph Algorith}
\end{abstract}

\section{Introduction}

In the application of telecommunication networks, network security, image segmentation, some edges are correlated with each other and share the risk of a common failure.  Those edges can be associated with labels (or colors) which
partition the set of edges into categories. The label cut problem(hedge connectivity) problem asks for the minimum number of edge sets(each edge set (or hedge) is the edges with the same label) whose removal disconnect the source-sink pair of vertices or the whole graph with minimum total weights(minimum cardinality for unweighted version). This problem is more general than the traditional min cut or hypergraph min cut problem, since the later ones are just special cases of the label cut problem. 

\subsection{Brief Literature Review}

\subsubsection{Minimum Label $s-t$ Cut problem}

Coudert, Datta, et al. \cite{cd} first considered the Minimum Label $s-t$ Cut problem, in which it was called Minimum $s-t$ Color Cut problem. Jha, Sheyner and Wing \cite{js} observed that the Minimum Label $s-t$ Cut problem is NP-hard since the Minimum Hitting Set problem can be reduced to it. Since there is a simple duality between Minimum Hitting Set and Minimum Set Cover, the approximation algorithm for Set Cover can be transformed to an approximation algorithm for Label Cut with approximation guarantee of $1+\ln|L|$. Zhang, Cai, et al. \cite{zc} gave the first non-trivial approximation algorithm for the Minimum Label $s-t$ Cut problem on general graphs, where a polynomial time $O(m^{1/2})$-approximation algorithm was given. For the approximation hardness, they showed that the Minimum Label $s-t$ Cut problem can not be approximated within $2^{\log^{1-\frac{1}{\log\log^cn}}n}$ for any constant $c<1/2$ unless $P=NP$. Tang and Zhang \cite{tz} further improved the approximation result, using linear programming techniques, they got $\min\{O((m/OPT)^{1/2}), O(n^{2/3}/OPT^{1/3})\}$-approximation algorithm.  Zhang \cite{zhang14} gave a combinatorial $\ell_{max}$-approximation algorithm for the Label $s-t$ Cut problem, where $\ell_{max}$ is the maximum $s-t$ length. 

Broersma et al. \cite{br} devised exact algorithm for the Minimum Label $s-t$ Cut problem with running time $O(n^2|L|!)$, where $L$ denotes the set of labels. 

Fellows, Guo and Kanj \cite{fgk} did the parameterized complexity of the Minimum Label $s-t$ Cut problem, they showed that it is $W[2]$-hard on graphs with pathwidth at most $3$ parameterized by the number of used labels $d$, $W[1]$-hard on graphs with pathwidth at most $4$, parameterized by the solution size.

\subsubsection{Minimum Label $s-t$ Cut problem with Label Overlaps}
Farag\'{o} \cite{fa} first studied the Path Vulnerability problem with Label Overlaps. He argued that for a path the important thing is not that how many failure sets cover the entire path. It is more important that how many failure sets intersects with it, i.e., contains a link from the path. To characterize the expressibility of the vulnerability measure(no matter what kind of path metric), he gave a Path Metric Representation Theorem to show the path vulnerability. However, the complexity of the Minimum Label $s-t$ Cut problem with Label Overlaps is unknown. The Path Vulnerability problem with Label Overlaps is a dual of the Minimum Label $s-t$ Cut problem with Label Overlaps. 

\subsubsection{Minimum Label Cut problem}

Minimum Label Cut problem is polynomial-time solvable in several special cases, including graphs with bounded treewidth, planar graphs, and instances with bounded label frequency \cite{zhang14}. Ghaffari, Karger and Panigrahi \cite{gkp} studied the Minimum Label Cut problem (they call this problem Hedge Connectivity), in which they gave a polynomial-time approximation scheme and a quasi-polynomial exact algorithm. They mentioned the Minimum Label Cut problem with Label Overlaps can be reduced to the Minimum Label Cut problem by replacing the overlap edge with a rainbow path, that operation won't change the hedge connectivity. However, their argument does not hold(one can find a more detailed argument in section 3), thus their algorithms only work for the non-overlapping case. 

\subsubsection{Minimum Label Cut problem with Label Overlaps}

Farag\'{o} \cite{fa} first studied the Minimum Label Cut problem with Label Overlaps. He showed that for an input positive integer $p$ it is NP-complete to decide whether the label cut(with label overlaps) less than or equal to $p$ exists. 

\subsection{Main Results and Organization}

This paper gives uniform notations for four different versions of the minimum label cut(hedge connectivity) problem of whether it is for a source-sink label cut or global cut, and whether edge labels have overlaps or not(all the four versions can be either unweighted or weighted depends on whether edge's weights are unique or not). First, we show the overlapping version can be transformed into a weighted non-overlapping version by a polynomial time operation. Then, we show the hardness and approximation performance of unweighted version and weighted version. Moreover, some useful properties of minimum label cut problem and further open problems are discussed.

\section{Preliminaries}

In this paper, unless specific mention, all original graphs are simple, i.e. they have no self-loop nor multiple edges and connected. If you see one edge has multiple colors, that means that a single edge has multiple labels (label overlaps). 

Given an undirected graph $G=(V, E)$, a label set $L=\{\ell_1,\ell_2,...,\ell_{|L|}\}$, an edge weight set $W =\{w_1, w_2, ..., w_m\}$, the four versions of minimum label cut problem(weighted or unweighted based on the edge weight set) are defined as follows: 

\begin{definition} \textbf{(Minimum Label $s-t$ Cut problem)}
	
Each edge has a label from the label set, and there are a source vertex $s\in V$ and a sink vertex $t\in V.$ A label $s-t$ cut is a subset of labels such that the removal of all edges with these labels from $G$ disconnects $s$ and $t$ in $G.$ The Minimum Label $s-t$ Cut problem is to find a label $s-t$ cut of minimum size. 
\end{definition}

\begin{definition} \textbf{(Minimum Label $s-t$ Cut problem with Label Overlaps)}
	
Each edge has one or multiple labels from the label set, and there are a source vertex $s\in V$ and a sink vertex $t\in V.$ A label $s-t$ cut is a subset of labels such that the removal of all edges with these labels from $G$ disconnects $s$ and $t$ in $G.$ The Minimum Label $s-t$ Cut problem is to find a label $s-t$ cut of minimum size. 
\end{definition}

\begin{definition} \textbf{(Minimum Label Cut problem)}
	
Each edge $e\in E$ has a label from the label set. A label cut is a subset of labels such that the removal of all edges with these labels from $G$ partitions $G$ into at least two connected components. The Minimum Label Cut problem is to find a label cut of minimum size. 
\end{definition}

\begin{definition} \textbf{(Minimum Label Cut problem with Label Overlaps)}
	
Each edge $e\in E$ has one or multiple labels from the label set. A label cut is a subset of labels such that the removal of all edges with these labels from $G$ partitions $G$ into at least two connected components. The Minimum Label Cut problem with Label Overlaps is to find a label cut of minimum size. 
\end{definition}

\section{Transform Overlapping Version into Non-overlapping Version}

In paper \cite{gkp}, it says "We also note that by insisting on the fact that the hedges are disjoint, we are not losing any generality. If hedges overlap on an edge, modeling the fact that the edge can fail as part of multiple groups, we can replace the edge by a path where each edge on the path belongs to a unique hedge. This transformation does not affect the hedge connectivity of the graph since removing any of the hedges containing the original edge disconnects the path in the transformed graph."

To illustrate the above operation, we show the following example. 

\begin{figure} \label{fig1}
	\centering
	\includegraphics[width=0.9\linewidth]{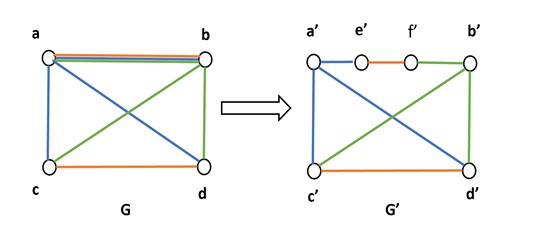}
	\caption{Replace Edge with Label Overlaps with Rainbow Path}
	\label{fig:operation}
\end{figure}

In Figure 1. the edge $E_{ab}$ has three colors, according to the paper \cite{gkp}, it can be replaced by a rainbow path from $a$ to $b$, the path has three colors. However, this operation is not hedge connectivity preserving. The error is that when we replace an edge with a path, such that each overlapping hedge that contained the original edge now only contains one edge of the path, then any new vertex of the path can be separated from the rest of the graph by removing the two hedges that contain its adjacent edges. Therefore, the hedge connectivity after the transformation drops to at most two, no matter how much it was originally. Thus, this transformation does not preserve hedge connectivity.

Let $D_L(v)$  be the label degree of vertex $v$, which is the number of different labels of adjacent edges of vertex $v$. 

\begin{property}
	The Minimum Label Cut of a graph is no larger than $\min\{D_L(v_i)\}$, where $i=1,2,...,n.$
\end{property}

In Figure 1. the original graph, $D_L(a)=D_L(b)=D_L(c)=D_L(d)=3$, but in the new graph, $D_L(a')=D_L(b')=1$, $D_L(e')=D_L(f')=2$, $D_L(c')=D_L(d')=3$, thus $\min_{v\in G}D_L(v)=3$ and $\min_{v\in G'}D_L(v)=1$. One can easily get the hedge connectivity of $G'$ is $1$, but the hedge connectivity of $G$ is definitely not $1$, thus that operation is not hedge connectivity preserving. 

Then, what operation can preserve hedge connectivity? We show it in the following: 

\begin{lemma}
	If there is one edge $e$ labeled by two different labels $L_i$ and $L_j$, then these two labels are correlated with each other, i.e. if there is one edge labeled by $L_i$ is removed(which leads all edges labeled by $L_i$ are removed), all other edges labeled $L_j$ are also removed consequently.  
\end{lemma}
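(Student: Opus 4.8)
The plan is to prove the correlation directly from the failure semantics of the label (hedge) model, using the shared edge $e$ as the bridge between $L_i$ and $L_j$. First I would make explicit the two semantic rules the model supplies. The \emph{atomicity} rule says that a label is removed all-or-nothing: the moment one edge carrying $L_i$ is deleted, every edge carrying $L_i$ is deleted; this is exactly the parenthetical already asserted in the statement. The \emph{shared-fate} rule, which is precisely what ``overlap'' encodes, says that an edge belonging to several labels represents a single physical resource whose loss is charged against each of its labels, so that the disappearance of such an edge is a triggering failure event for every label it carries. These two rules are the content that distinguishes the correlated-overlap model from the independent-hedge model of \cite{gkp}, and the lemma is their immediate combination.

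With the rules fixed, the argument is a short chain of implications through $e$. Suppose some edge carrying $L_i$ is removed. By atomicity applied to $L_i$, all $L_i$-edges are removed, and in particular $e$ is removed since $e$ carries $L_i$. Now $e$ also carries $L_j$, so by the shared-fate rule the loss of $e$ is itself a failure event for $L_j$; applying atomicity a second time, this time to $L_j$, forces every edge carrying $L_j$ to be removed. This is exactly the defining property of correlation stated in the lemma, established in the original graph rather than in any transformed instance.

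The step I expect to be the main obstacle is justifying the shared-fate rule without circularity, since naively ``removing a label removes its edges'' does not by itself propagate across labels. I would address this by appealing to the modeling origin of overlaps: a label is a shared-risk group, and membership of $e$ in both $L_i$ and $L_j$ asserts that a single underlying cause can destroy $e$ on behalf of either group, so once $e$ is gone that cause has been realized and must be counted against $L_j$ as well. I would reinforce this with the consistency check already available in the excerpt: were the labels \emph{not} correlated, one could decouple them along a rainbow path as in \cite{gkp}, but Section~3 and Figure~1 show that this decoupling collapses the hedge connectivity (to at most two, and in the example to one) while $G$ has strictly larger connectivity; only the correlated reading is consistent with Property~1 and with the true connectivity of $G$.

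Finally I would record the intended payoff, which motivates the precise form of the statement. Because $L_i$ and $L_j$ are forced to be cut together, a faithful non-overlapping reformulation cannot split them along a path but must instead merge the two correlated labels into one --- equivalently, treat them as a single weighted label --- so that any cut paying for one automatically pays for the other. This is exactly the property the lemma is designed to supply for the corrected, hedge-connectivity-preserving transformation developed in the remainder of the section.
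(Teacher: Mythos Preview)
The paper offers no proof of this lemma at all; it is stated bare and immediately followed by its corollary, so there is nothing in the paper to compare your argument against. Your write-up is therefore strictly more detailed than the source, and your overall framing---isolate the atomicity rule, isolate a propagation rule, chain them through $e$---is exactly the reasoning the authors leave implicit.

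That said, you have put your finger on the real issue and then talked yourself past it. Under Definitions~2 and~4 as written, a label cut is simply a set of labels, and removing that set deletes every edge carrying at least one chosen label; there is no mechanism by which the deletion of an \emph{edge} feeds back to force the deletion of a \emph{label}. So from those definitions alone, removing $L_i$ deletes $e$ (since $e$ carries $L_i$) but leaves untouched every edge that carries only $L_j$, and the lemma is literally false. Your ``shared-fate'' rule is precisely the missing ingredient, but it is an additional semantic stipulation about the failure model, not something derivable from the paper's formal definitions. The paper is tacitly \emph{adopting} this cascading-failure reading (the SRG interpretation you invoke), and the lemma is best read as announcing that modeling choice rather than as a theorem proved from prior axioms. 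Your consistency check via Figure~1 is suggestive but not a proof: it shows the rainbow-path decoupling of \cite{gkp} fails to preserve hedge connectivity, not that the cascading semantics is the only alternative. If you keep this as a proof, make the status of the shared-fate rule explicit---it is an axiom of the overlap model being introduced here, not a consequence of Definition~4.
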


\begin{corollary}
	If there is one edge $e$ labeled by two different labels $L_i$ and $L_j$ (it may be also labeled by other labels other than $L_i$ and $L_j$ ), then all other edges labeled by $L_i$ or $L_j$ can be relabeled by a new created label $L_t'$ with weight $1+1=2$, this operation preserves the hedge connectivity. 
\end{corollary}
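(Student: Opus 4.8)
The plan is to prove the Corollary by exhibiting a weight-preserving bijection between the label cuts of the original graph $G$ and those of the modified graph $G'$ — the graph in which every edge carrying $L_i$ or $L_j$ now carries instead the single new label $L_t'$ of weight $1+1=2$, with all other labels left unchanged — and then concluding that the minimum-weight label cut, i.e.\ the hedge connectivity, is the same for $G$ and $G'$.

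The first step is to normalize the label cuts of $G$ by invoking the Lemma. Since $e$ carries both $L_i$ and $L_j$, the two labels are correlated: any removal that deletes the edges of $L_i$ also deletes the edges of $L_j$ and conversely. Hence a label subset that nominally contains only one of $L_i,L_j$ is, as an actual failure event, the same as the one containing both, and must be charged the combined weight $w(L_i)+w(L_j)=2$; consequently, when we range over the label cuts of $G$ it is enough to consider subsets $C\subseteq L$ that contain \emph{both} $L_i$ and $L_j$ or \emph{neither} of them. Turning this informal "correlation" into a precise statement — pinning down exactly which label subsets count as cuts and what weight each one carries — is the step I expect to be the main obstacle; everything after it is bookkeeping.

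Next I would define the map sending a label cut $C$ with $\{L_i,L_j\}\subseteq C$ to $C'=(C\setminus\{L_i,L_j\})\cup\{L_t'\}$, and a label cut $C$ with $C\cap\{L_i,L_j\}=\emptyset$ to itself (which is a legitimate subset of the label set of $G'$). To check that $C'$ is a label cut of $G'$ exactly when $C$ is a label cut of $G$, I would partition the edges into three kinds — those carrying only $L_i$ or only $L_j$; those carrying $L_i$ or $L_j$ together with some other label; and those carrying neither — and verify in each case that the edge is deleted by $C$ in $G$ if and only if it is deleted by $C'$ in $G'$. For the first kind this uses that $L_t'\in C'$ precisely when $\{L_i,L_j\}\subseteq C$; for the third kind it uses that $C$ and $C'$ agree on all labels outside $\{L_i,L_j,L_t'\}$. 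Since the deleted edge sets then coincide, $C$ disconnects $s$ from $t$ — or splits $G$ into at least two components, in the global version — if and only if $C'$ does, and $w(C')=w(C)$ because $w(L_t')=w(L_i)+w(L_j)$. Running the map in reverse (replace $L_t'$ by $\{L_i,L_j\}$ in any cut of $G'$ that uses it, and leave the rest fixed) shows it is a bijection, so the minimum label-cut weights of $G$ and $G'$ agree and hedge connectivity is preserved.

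Finally I would record the intended use. If $e$ in fact carries a third label $L_k$, then after this operation $e$ carries $L_t'$ and $L_k$, so the Corollary applies again, now merging weights $2$ and $1$; iterating over all edges that still carry more than one label terminates and produces a graph with no label overlaps and integer edge-label weights, which is exactly the reduction of the overlapping version to a weighted non-overlapping version announced in the introduction.
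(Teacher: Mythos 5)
Your plan stalls exactly where you predicted, and the obstacle is not a technicality. Note first that the paper itself offers no proof of this Corollary: it is stated as an immediate consequence of the preceding Lemma, which is itself asserted without proof and tacitly uses a failure-propagation semantics (removing the shared edge $e$ ``consequently'' removes all edges carrying the other label) that does not follow from the paper's own Definitions of label cut with overlaps. Under those definitions a label cut is just a subset $C$ of labels, the removed edges are exactly those carrying some label of $C$, and the cost is the total weight of $C$. Hence a cut containing $L_i$ but not $L_j$ is perfectly legitimate: it removes no edge that carries only $L_j$, and it is charged only $w(L_i)$. Nothing licenses your normalization claim that such a cut ``must be charged the combined weight'' or that one may restrict attention to cuts containing both or neither of $L_i,L_j$. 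Without that step your map is not a weight-preserving bijection; cuts of $G$ that use exactly one of the two labels have no image of equal weight in $G'$, so the minimum in $G'$ can exceed the minimum in $G$.

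Indeed, under the stated definitions the Corollary is false, so no amount of bookkeeping can close the gap. Take the triangle on $\{a,b,c\}$ with $ab$ labeled $\{L_i,L_j\}$, $bc$ labeled $\{L_i\}$, and $ca$ labeled by a third label $L_k$, all labels of weight $1$. In $G$ the single label $L_i$ removes $ab$ and $bc$ and isolates $b$, so the minimum label cut has weight $1$. After the prescribed relabeling, $ab$ and $bc$ carry only $L_t'$ of weight $2$, and the cheapest disconnecting label set is $\{L_t'\}$ of weight $2$: the hedge connectivity has increased. So the ``correlation'' asserted in the Lemma is precisely what would need to be proved, and it cannot be derived from the problem as defined. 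The only way to rescue your argument (and the paper's operation $K$) is to adopt explicitly the alternative model in which the failure of an edge triggers the failure of every label it carries; in that model the overlapping labels do form a single failure event of weight $w(L_i)+w(L_j)$, your bijection goes through, and your closing remark about iterating the operation is fine --- but that is a different problem from the one defined in Section 2, and the discrepancy should be stated rather than papered over.
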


In a graph with multiple edges which have label overlaps, we use disjoint-set data structure(also called a union-find or merge–find set data structure) to count the minimum relabel operation. Let's name the following procedure as operation $K$.

\begin{itemize}
	\item \textbf{Step 1.} \\ List all edges which have label overlaps, each of them is represented as a set with edge labels as its elements. 
	\item \textbf{Step 2.} \\ If two sets have common elements, merge them as one union set. 
	\item \textbf{Step 3.} \\ Relabel each union set elements with one new created label. 
\end{itemize}

\begin{figure}\label{fig2}
	\centering
	\includegraphics[width=0.8\linewidth]{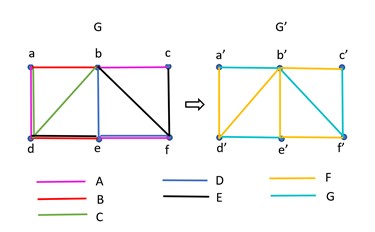}
	\caption{Operation $K$}
	\label{fig:operationnew}
\end{figure}

For example, in Figure 2., Graph $G$ has five kinds of labels and three edges have label overlaps. The three edges can be represented as three sets, their labels are its elements: 
$\{A, C\}$, $\{B, E\}$, $\{A,D\}$. After merging sets with common elements into union sets, we can get new sets: $\{A, C, D\}$, $\{B, E\}$. Label edges with labels in set $\{A, C, D\}$ as $F$(yellow) with weight $3$ and label edges with labels in set $\{B, E\}$ as $G$ (blue) with weight $2$. Eventually, one can get $G'$, which preserves the hedge connectivity of $G$. 

\begin{theorem}
	After taking operation $K$ for the graph with label overlaps, (1) each edge only has one label, (2) the weight of each hedge is no larger than $\max_{i=1}^m\{D_L(v_i)\}$ of the original graph, the total number of hedges in the new graph is no larger than the total number of hedges in the original graph, (3) the total weight of hedges in the new graph is exactly equal to the number of hedges of the original graph. 
\end{theorem}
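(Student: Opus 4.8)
The plan is to reduce the whole statement to one structural observation --- that operation $K$ induces a partition of the original label set $L=\{\ell_1,\dots,\ell_{|L|}\}$ --- and then read off each of the three claims from that partition. First I would make Step~2 precise. Put $\ell_i\approx\ell_j$ whenever some edge of $G$ carries both $\ell_i$ and $\ell_j$, let $\sim$ be the transitive closure of $\approx$, and observe that the union--find routine of operation $K$ (repeatedly merging any two Step~1 sets that share an element, to a fixpoint) computes exactly the $\sim$-classes that meet a multi-label edge, while every label occurring only on single-label edges stays a singleton class. Hence after Step~3 the new labels are in bijection with the $\sim$-classes, each edge's label set is contained in one class, and the weight given to a new hedge equals the cardinality of its class. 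This identification is the only modeling step that needs an argument, and it is just the standard correctness of union by shared elements, so I treat it as routine.

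Granting this, claims (1) and (3) and the counting half of (2) are short. For (1): an edge $e$ with label set $S$ is either single-label, $S=\{\ell\}$, and is relabeled by the unique new name of $\ell$'s class; or $S$ is one of the Step~1 sets, hence $S$ sits inside a single $\sim$-class and $e$ is relabeled by that class's one new name --- either way $e$ ends with exactly one label. For the counting part of (2): the $\sim$-classes partition $L$, so the number of new hedges is the number of classes, at most $|L|$ (the number of original hedges), with a strict drop as soon as one genuine merge occurs. For (3): the total new weight is $\sum_{C}|C|=|L|$ taken over the $\sim$-classes $C$, and $|L|$ is precisely the number of hedges of the original graph.

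The weight bound in (2) --- each new hedge weighs at most $\max_i D_L(v_i)$ --- is the one part I expect to be genuinely delicate. The natural attempt is to find, for each $\sim$-class $C$, a vertex $v$ with $D_L(v)\ge|C|$, e.g. a vertex incident to many of the overlap edges whose label sets generate $C$, since all their labels then appear at $v$; this succeeds when those overlap edges share a vertex. It is not clear it survives when the chain of overlaps building $C$ is spread out: on the path $a\!-\!b\!-\!c\!-\!d$ with edge label sets $\{\ell_1,\ell_2\},\{\ell_2,\ell_3\},\{\ell_3,\ell_4\}$ one gets $C=\{\ell_1,\ell_2,\ell_3,\ell_4\}$ and a new hedge of weight $4$, although every vertex has label degree at most $3$. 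So the plan for this clause is to either prove it under an added hypothesis (such as: the overlap edges of each class share a common vertex) or to retreat to the always-true bound $|C|\le|L|$; pinning down the exact form of this clause is where the real work sits, the rest being bookkeeping on the induced partition.
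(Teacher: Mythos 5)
Your handling of (1), (3), and the hedge-count half of (2) is correct and is essentially the paper's own bookkeeping: the paper treats (1) as immediate from operation $K$, and proves (3) by observing that each original label is absorbed into exactly one new label, so the weights of the new hedges sum to the number of original hedges; your formulation via the partition of $L$ into $\sim$-classes is just a cleaner statement of the same argument, and the count of new hedges being at most $|L|$ falls out the same way.

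The weight bound in (2) is where your hesitation is vindicated, and you should not treat it as merely ``delicate'': the paper does claim a proof, asserting that a new hedge of weight exceeding $\max_i D_L(v_i)$ would force some vertex $v_j$ of the original graph to satisfy $D_L(v_j)\geq 1+\max_i D_L(v_i)$, on the grounds that the graph is connected and the hedge collects that many labels. That is precisely the step your example refutes: the labels of a merged class need only be linked by a chain of overlapping edges, not concentrated at one vertex, so no vertex need see them all. On the path $a\!-\!b\!-\!c\!-\!d$ with edge label sets $\{\ell_1,\ell_2\},\{\ell_2,\ell_3\},\{\ell_3,\ell_4\}$, operation $K$ yields a single hedge of weight $4$ while $\max_i D_L(v_i)=3$, so that clause of the theorem is false as stated and the paper's proof of it is unsound. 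Accordingly, the missing piece in your write-up is not a gap in your reasoning but a genuine error in the statement: the right move is to say so explicitly and adopt one of your repairs --- either add a hypothesis forcing the overlap edges generating each class to share a vertex (in which case your ``common vertex'' argument does give the bound), or replace the bound by the trivial $|C|\leq|L|$.
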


\begin{proof}
	(1) is obvious just following operation $K$.  
	
	If there is one hedge with weight larger than $\max_{i=1}^m\{D_L(v_i)\}$, then in the original graph, there must be a vertex $v_j$, in which  $D_L(v_j)\geq 1+\max_{i=1}^m\{D_L(v_i)\}$ (since the original graph is connected and this hedge in the original graph is the collection of a number of labels as same as the weight in the new graph), which contradicts the definition of  $\max_{i=1}^m\{D_L(v_i)\}$. An obvious consequence is that the total number of hedges in the new graph is no larger than the total number of hedges in the original graph. Therefore (2) is true. 
	
	As after label replacement step, correlated labels(have nonempty overlaps in some edges) is replaced by a new label with weight to be the number of such correlated labels, each label with other label correlations only replaced by one new label, thus the total weight of hedges in the new graph is exactly equal to the total number of hedges in the original graph. Thus (3) is proved. 
\end{proof}

\section{Between Weighted Version and Unweighted Version}

\subsection{Computational Complexity with Label Overlaps }

It is known that the Minimum Label Cut problem with Label Overlaps is NP-hard. 

\begin{theorem} \cite{fa}
It is NP-hard to find the Minimum Label Cut problem with Label Overlaps.
\end{theorem}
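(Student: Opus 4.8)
The plan is to obtain NP-hardness by a polynomial-time many-one reduction from Minimum Hitting Set — equivalently, from Minimum Vertex Cover if one insists that every set have size two — to the decision version of the Minimum Label Cut problem with Label Overlaps, i.e. ``does a label cut of size at most $p$ exist?''. Membership in NP is immediate (guess the label set, delete the corresponding edges, test connectivity), so the content is hardness. Starting from a ground set $U$, subsets $S_1,\dots,S_m\subseteq U$, and an integer $k$, I would build a labeled graph $G'$ whose set of ``useful'' labels is $U$ itself, with the property that deleting a set $X\subseteq U$ of labels disconnects $G'$ if and only if $X$ meets every $S_i$; then ``$G'$ admits a label cut of size $\le p$'' is equivalent to ``the instance admits a hitting set of size $\le p$''. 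The mechanism rests on two facts: (i) an edge whose \emph{entire} label set equals $S_i$ is destroyed exactly when $X\cap S_i\neq\emptyset$, and (ii) two highly connected ``terminal blobs'' joined only by such edges fall apart exactly when \emph{all} of those edges are destroyed. Fact (i) is where label overlaps are indispensable, and it is also why the naive rainbow-path replacement criticized in Section~3 cannot be used here.

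Concretely I would first describe the idealized multigraph: two vertices $a,b$, and for each $i$ a parallel edge $a$--$b$ carrying the label set $S_i$. Disconnecting $a$ from $b$ means killing every parallel edge, i.e. choosing $X$ that hits every $S_i$, so the minimum label cut of this multigraph equals the minimum hitting set. To respect the Section~2 convention that instances are simple graphs, I would then blow up each of $a,b$ into a clique $K_N$ on $N$ fresh vertices whose $\binom{N}{2}$ internal edges receive pairwise-distinct brand-new labels, and replace each parallel edge by the edges of one of $m$ pairwise edge-disjoint perfect matchings between the two clique vertex sets (these exist because $K_{N,N}$ decomposes into $N$ perfect matchings), every edge of the $i$-th matching carrying the label set $S_i$. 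Taking $N=\max(|U|,m)+2$ keeps $G'$ simple and of polynomial size while ensuring $N-1>|U|$.

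It then remains to show the minimum label cut of $G'$ equals the minimum hitting set. For the easy direction, a minimum hitting set $X$, used as a set of labels to delete, destroys every matching but leaves both cliques intact, so $G'$ splits into exactly two components and the minimum label cut is at most $|X|$. For the converse, let $X$ be any label cut with $|X|\le|U|$. First one argues $X$ may be assumed to use only labels from $U$: a private clique label lies on a single edge, and removing fewer than $N-1$ clique edges can neither break a $K_N$ nor reconnect anything, so given $|X|\le|U|<N-1$ the clique labels of $X$ are redundant. With $X\subseteq U$ both cliques stay whole, hence to disconnect $G'$ the set $X$ must sever every matching, i.e. hit every $S_i$; so $|X|$ is at least the minimum hitting set. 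Combining the two directions completes the reduction.

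The step I expect to be the real obstacle — and the one a careful write-up must dwell on — is this converse direction: one must rule out \emph{structural} cuts of $G'$ (cuts exploiting the graph's topology rather than the set system), in particular isolating a single vertex or slicing a clique, from being cheaper than the intended covering-based cut. This is exactly why the clique blow-up has to be made strictly larger than any possible hitting set and why its edges must carry private, non-overlapping labels; making these size constraints mutually consistent while keeping the construction polynomial is the technical crux, after which the equivalence with Hitting Set is routine.
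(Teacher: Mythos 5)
Your proposal is correct and follows essentially the route the paper itself indicates for this theorem: NP-hardness of the global Minimum Label Cut with Label Overlaps via a polynomial-time reduction from Hitting Set (equivalently, Set Cover) to the decision version ``does a label cut of size at most $p$ exist?'', which is exactly the reduction the paper attributes to Farag\'{o}~\cite{fa}. Your two-clique-plus-edge-disjoint-matchings gadget, with private clique labels and $N-1>|U|$ so that any cut of size at most $|U|$ leaves both cliques connected and must therefore hit every $S_i$, is a sound, self-contained instantiation of that same approach (modulo the standard assumption that every $S_i$ is nonempty).
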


The basic idea of the proof in \cite{fa} is via a reduction from the Set Cover problem to the decision version of the problem whether the Label Cut with label overlaps is less than or equal to a positive integer $p$ exists. 

 However the hardness of Minimum Label $s-t$ Cut problem with Label Overlaps is still open, we give a proof that it is also NP-hard.  

\begin{theorem}
It is NP-hard to find the Minimum Label $s-t$ Cut problem with Label Overlaps. 
\end{theorem}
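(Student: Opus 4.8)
The plan is to establish NP-hardness of the decision version of the problem: given $(G,L,s,t)$ together with an integer $p$, decide whether there is a label $s-t$ cut of size at most $p$. The key observation is that Definition 2 properly contains Definition 1 — an instance of the (non-overlapping) Minimum Label $s-t$ Cut problem in which every edge carries exactly one label is, verbatim, an instance of the overlapping version, and on such instances the two problems have identical feasible solutions and hence the same optimum, because whether a label set $H$ disconnects $s$ from $t$ depends only on which edges $H$ removes, and the edge-to-label incidence is the same under both readings. So it suffices to reduce from a problem already known to be NP-hard for Definition 1.

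Concretely, I would recall and reproduce the classical reduction from Minimum Hitting Set noted by Jha, Sheyner and Wing \cite{js}. Given a Hitting Set instance with ground set $L$ and target sets $C_1,\dots,C_r\subseteq L$, I would build a graph with terminals $s,t$ and $r$ internally vertex-disjoint $s-t$ paths $P_1,\dots,P_r$, where $P_j$ consists of $|C_j|$ edges, the $k$-th of which is labeled by the $k$-th element of $C_j$; every edge gets exactly one label. A label set $H\subseteq L$ is then a label $s-t$ cut precisely when $H\cap C_j\neq\emptyset$ for every $j$ (removing at least one edge from each path), so the feasible label $s-t$ cuts of the constructed instance are exactly the hitting sets, and minimum ones correspond. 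Since this construction uses only single-label edges, it is simultaneously a valid instance of the overlapping version with the same answer, which yields the claim.

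I would also present, as an alternative more in line with the preceding NP-hardness of the global Minimum Label Cut problem with Label Overlaps \cite{fa}, a reduction from that problem. A global label cut of size at most $p$ exists if and only if, for some pair of vertices $u\neq v$, a label $u-v$ cut of size at most $p$ exists — any partition of $G$ into at least two components separates some pair, and conversely every $u-v$ cut disconnects $G$. Hence a polynomial-time algorithm for the $s-t$ version, run on all $O(n^2)$ pairs, would solve the global version in polynomial time, contradicting its NP-hardness; this gives a (Turing) reduction.

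The main difficulty is not depth but care with the overlapping semantics: one must check that permitting — without using — multiple labels per edge creates no shortcut in the embedded single-label instance, which is immediate from the incidence-only characterization of feasibility above. The remaining loose end, should a many-one rather than a Turing reduction be preferred for the second approach, is to attach a gadget that forces a distinguished pair of vertices to serve as $s$ and $t$ in the global instance; this is routine and I would only sketch it.
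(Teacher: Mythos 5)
Your proposal is correct, and its core is the same as the paper's: both ultimately rest on the classical Hitting Set reduction (ground-set elements as labels, each target set realized as an internally disjoint $s$--$t$ path, so that label $s$--$t$ cuts are exactly hitting sets). Where you differ is in the framing and in the machinery. Your primary argument is the cleanest possible one: since Definition 2 strictly contains Definition 1 (an instance in which every edge carries exactly one label is verbatim an instance of the overlapping problem, with identical feasible sets), NP-hardness of the overlapping version is an immediate consequence of the known hardness of the non-overlapping Label $s$--$t$ Cut problem, and you then make this self-contained by spelling out the disjoint-path construction explicitly and verifying the cut--hitting-set correspondence in both directions. The paper's proof instead reduces from Hitting Set while additionally invoking Menger's theorem and the relabeling operation $K$, and describes the construction only loosely (the correspondence $U=L'$, $m=|C_{st}|$, ``each $S_i$ corresponds to one $st$-path''), which leaves the direction and the exact instance mapping less explicit than in your write-up; your version buys a complete and unambiguous argument at no extra cost. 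Your second, alternative route --- a Turing reduction from the global overlapping problem of Farag\'{o} by trying all $O(n^2)$ vertex pairs --- does not appear in the paper; it is valid for NP-hardness under Cook reductions, and you correctly flag that a many-one version would need an extra gadget. One minor point of care: in the disjoint-path construction the same label may legitimately appear on several paths (when an element lies in several sets $C_j$), which is allowed and does not disturb the equivalence, as your incidence-only characterization of feasibility already makes clear.
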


\begin{proof}
Given a set $U=\{u_1, u_2, u_3,..., u_n\}$ and $S_1, S_2, ..., S_m$ are subsets of $U$ and integer $l.$ The hitting set problem is to find a subset of $U$ with size at most $l$ which has intersection with all $S_1, S_2,..., S_m.$ It is known that the hitting set problem is NP-complete, which is among the original 21 NP-complete problem list of Karp \cite{karp}. According to Menger's theorem, the size of the minimum edge cut $C_{st}$ for $s$ and $t$ is equal to the maximum number of pairwise internally edge-disjoint paths from $s$ to $t$. The minimum edge cut $C_{st}$ for $s$ and $t$ can be found in polynomial time. Then one can apply the operation $K$ introduced earlier to relabel the graph. Let's make a construction to make $U=L'$, $m=|C_{st}|$, each $S_i$ is corresponding to one $st$-path of the graph. The hitting set intersecting all $S_i$ is exactly the number of hedge edges removal to disconnect all the $|C_{st}|$ paths of $s$ and $t$. 
\end{proof}

\subsection{Approximation Results with Label Overlaps}

\begin{theorem}	
Minimum Label $s-t$ Cut problem with Label Overlaps is APX-hard. 
\end{theorem}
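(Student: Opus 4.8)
The plan is to exhibit a cost-preserving reduction from \emph{Minimum Vertex Cover} --- which is APX-complete, and APX-hard even on graphs of maximum degree three (Papadimitriou--Yannakakis; Alimonti--Kann) --- to the Minimum Label $s$--$t$ Cut problem \emph{with Label Overlaps}. Given a Vertex Cover instance $H=(V_H,E_H)$, I would construct a labeled graph $G=(V,E)$ with label set $L=V_H$ as follows: take $V=\{s,t\}\cup\{x_e : e\in E_H\}$, and for every edge $e=\{u,v\}\in E_H$ add the two edges $\{s,x_e\}$ and $\{x_e,t\}$, putting on \emph{each} of them the two-element label set $\{u,v\}$, so that every edge of $G$ genuinely carries overlapping labels. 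This $G$ is simple and is built in linear time, and since each $x_e$ is adjacent only to $s$ and $t$, the only simple $s$--$t$ paths in $G$ are the length-two paths $(s,x_e,t)$, one for each edge of $H$.

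Next I would establish the exact correspondence between solutions. For $C\subseteq L=V_H$, deleting from $G$ all edges carrying some label of $C$ kills the path $(s,x_e,t)$ iff it deletes the edge $\{s,x_e\}$ or $\{x_e,t\}$ iff $u\in C$ or $v\in C$, where $e=\{u,v\}$. Hence $C$ disconnects $s$ from $t$ if and only if $C$ is a vertex cover of $H$; in particular the minimum label $s$--$t$ cut of $G$ has size exactly $\tau(H)$, the vertex cover number of $H$, and any label $s$--$t$ cut of $G$ is itself, verbatim, a vertex cover of $H$ of the same cardinality. This is an $L$-reduction with parameters $\alpha=\beta=1$: a hypothetical $(1+\varepsilon)$-approximation for Minimum Label $s$--$t$ Cut with Label Overlaps would immediately yield a $(1+\varepsilon)$-approximation for Minimum Vertex Cover, contradicting its APX-hardness unless $\mathrm{P}=\mathrm{NP}$. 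Starting from bounded-degree Vertex Cover, the same construction moreover gives APX-hardness even on instances where every label occurs on $O(1)$ edges and every $s$--$t$ path has length two; and if one prefers a non-overlapping witness, labeling $\{s,x_e\}$ by $u$ and $\{x_e,t\}$ by $v$ works equally well and proves APX-hardness of the non-overlapping $s$--$t$ version, which a fortiori implies the overlapping case.

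The part requiring the most care is the \emph{choice of the starting problem}: one cannot simply recycle the Hitting Set / Set Cover reduction used for the previous theorem, because Set Cover is only known to be hard to approximate to within a logarithmic factor rather than APX-hard in the constant-factor sense, so it yields no lower bound ruling out a PTAS. Everything else is routine verification --- simplicity of $G$, the explicit description of its $s$--$t$ paths, and the $L$-reduction inequalities, all of which follow at once from the size identity $\mathrm{OPT}_{\text{Label }s\text{-}t\text{ Cut}}(G)=\tau(H)$ together with the verbatim solution mapping. I would also note, for context, that the APX-hardness asserted here is in fact subsumed by the known super-constant inapproximability of the non-overlapping $s$--$t$ version cited earlier; the value added is a short self-contained argument and the observation that intractability persists on highly restricted, bounded-occurrence overlap instances.
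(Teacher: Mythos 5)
Your proof is correct, but it takes a genuinely different route from the paper. The paper argues via its operation $K$: it transforms an overlapping instance into a weighted non-overlapping instance in polynomial time (claimed to preserve the optimum) and then invokes the known APX-hardness of the weighted Minimum Label $s$--$t$ Cut problem from Zhang, Cai et al.\ \cite{zc}. Note that, stated this way, the paper's reduction runs from the problem to be proven hard \emph{to} an already-hard problem, which by itself only bounds the overlapping version from above; the step that actually carries the hardness is the (implicit) observation that every non-overlapping instance is a special case of the overlapping one, exactly the point you make at the end of your argument (``if one prefers a non-overlapping witness\ldots which a fortiori implies the overlapping case''). Your route instead gives a short, self-contained L-reduction with $\alpha=\beta=1$ from Vertex Cover, using the standard gadget in which each edge $\{u,v\}$ of $H$ becomes a length-two $s$--$t$ path whose two edges carry the label pair $\{u,v\}$, so label $s$--$t$ cuts and vertex covers coincide verbatim. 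What the paper's approach buys is consistency with its own framework (operation $K$ and the citation to \cite{zc}) without constructing anything new; what your approach buys is independence from \cite{zc} and from the correctness of operation $K$, an explicitly approximation-preserving reduction, and the stronger conclusion that APX-hardness already holds on highly restricted instances ($\ell_{max}=2$, bounded label frequency), which matches and slightly sharpens the paper's subsequent corollaries on paths of length two. The only caveat worth keeping in mind is cosmetic: ``contradicting APX-hardness'' should be read as ``contradicting the fact that Vertex Cover has no PTAS unless $P=NP$,'' which is exactly what your L-reduction from the APX-complete Vertex Cover problem delivers.
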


\begin{proof}
After doing the operation $K$ as the preprocessing procedure, the Minimum Label $s-t$ Cut problem with Label Overlaps transformed into the Weighted Minimum Label $s-t$ Cut problem. This operation preserves the hedge connectivity, and can be done in polynomial time. As the Weighted Minimum Label $s-t$ Cut problem is APX-hard \cite{zc}, thus the Minimum Label $s-t$ Cut problem with Label Overlaps is also APX-hard.
\end{proof}

Based on the existing results, in addition to some lemmas, one can easily get corollaries for label overlapping problems. 

\begin{corollary}
Minimum Label $s-t$ Cut problem with Label Overlaps does not admit PTAS even in graphs
$G$ with $l_{max} = 2$, $f_{max} = 4$, and $tw(G) = 2$, where $l_{max}$ is longest path between $s$ and $t$, $f_{max}$ is the label frequency, i.e. the number of appearances of a label in the input graph and $tw(G)$ is the treewidth of $G.$
\end{corollary}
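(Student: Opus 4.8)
The plan is to obtain the corollary as an immediate consequence of the corresponding inapproximability statement for the \emph{non-overlapping} Minimum Label $s-t$ Cut problem, exploiting only the trivial fact that the non-overlapping problem is the special case of the overlapping one in which each edge carries a single label, together with the observation that the three structural parameters $l_{max}$, $f_{max}$, $tw(G)$ are preserved under this inclusion.

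First I would note that any instance of the (non-overlapping) Minimum Label $s-t$ Cut problem is literally an instance of the Minimum Label $s-t$ Cut problem with Label Overlaps: the feasible label $s-t$ cuts are the same subsets of $L$, the optimum value is unchanged, and the underlying graph $G$ is the same graph. Hence the longest $s-t$ path length $l_{max}$, the maximum label frequency $f_{max}$, and the treewidth $tw(G)$ all coincide, so the class of instances with $l_{max}=2$, $f_{max}=4$, $tw(G)=2$ is closed under this embedding. Consequently any approximation lower bound for the non-overlapping problem restricted to that class transfers verbatim to the overlapping problem restricted to the same class.

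Second, I would invoke the structured hardness of the non-overlapping version: it is known (see \cite{zc,zhang14}) that the Minimum Label $s-t$ Cut problem admits no PTAS unless $P=NP$ even when the input is restricted to instances with $l_{max}=2$, $f_{max}=4$, $tw(G)=2$ --- this is precisely the regime of the APX-hardness reduction, whose gadgets have constant $s$-$t$ distance, constant label frequency and treewidth two. Combining this with the embedding of the first step yields the corollary. An alternative route would keep the problem overlapping and apply operation $K$, which reduces it to the weighted non-overlapping problem while leaving $G$ untouched (it only renames and merges labels), so $l_{max}$ and $tw(G)$ are automatically retained and one would only have to track the effect of label merging on $f_{max}$; I consider the direct ``special-case'' argument cleaner and would present that one.

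The main obstacle is bookkeeping rather than mathematical depth: one must check that the cited reduction genuinely satisfies all three numerical constraints \emph{simultaneously} ($l_{max}=2$ forces essentially two-layer gadgets between $s$ and $t$, $f_{max}=4$ caps how often each label may be reused, and $tw(G)=2$ rules out grid-like width), and, crucially, that its stated conclusion is the nonexistence of a PTAS rather than merely APX-hardness of the unrestricted problem, since the latter does not by itself localize to this restricted graph class. Once the reference is read at that level of detail, no further argument is required.
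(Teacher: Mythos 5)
Your argument is correct, and it is essentially the derivation the paper intends but never writes out: the corollary is stated with no explicit proof, only the remark that it follows from ``existing results'' (the known no-PTAS result for the non-overlapping Label $s$-$t$ Cut restricted to $l_{max}=2$, $f_{max}=4$, $tw(G)=2$ in \cite{zc,zhang14}) together with the fact that the overlapping problem generalizes the non-overlapping one. Your special-case embedding makes this precise and, importantly, runs the reduction in the logically correct direction: the operation-$K$ route that the paper uses just before (in the APX-hardness theorem) maps overlapping instances \emph{to} weighted non-overlapping ones, which by itself transfers algorithms to, not hardness from, the overlapping problem; your observation that a single-label instance is literally an overlapping instance with identical feasible cuts, optimum, graph, $l_{max}$, $f_{max}$ and $tw(G)$ is the clean fix, and you are right to prefer it over tracking how label merging affects $f_{max}$. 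The only point requiring care is the one you already flag: the corollary stands or falls with the cited structured hardness of the non-overlapping version, i.e.\ that the reduction in \cite{zc}/\cite{zhang14} achieves all three parameter bounds simultaneously and yields non-existence of a PTAS rather than only unrestricted APX-hardness; the paper simply imports that statement, and so do you.
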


\begin{corollary}
For any constant $c<1/2,$ the Minimum Label $s-t$ Cut problem with Label Overlaps can not be approximated within $2^{\log^{1-\frac{1}{\log\log^cn}}n}$ in polynomial time unless P=NP, where $n$ is the input length of the problem. 
\end{corollary}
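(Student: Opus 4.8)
The plan is to obtain this inapproximability bound essentially for free from the already-known hardness of the ordinary (non-overlapping, unweighted) Minimum Label $s-t$ Cut problem, by exhibiting that problem as a special case of the overlapping version in a way that does not inflate the input size, so that the bound transfers verbatim in terms of the same parameter $n$.

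First I would note that any instance $(G,L,s,t)$ of the Minimum Label $s-t$ Cut problem of Definition~1, in which every edge carries exactly one label, is already a perfectly valid instance of the Minimum Label $s-t$ Cut problem with Label Overlaps of Definition~2, since ``one or multiple labels'' subsumes the single-label case. Under either reading the feasible label $s-t$ cuts are exactly the same subsets of $L$, hence the two optimum values coincide and the encoding length is unchanged. Consequently, a polynomial-time algorithm approximating the overlapping problem within $2^{\log^{1-\frac{1}{\log\log^cn}}n}$ for some constant $c<1/2$, run on such an instance, would approximate the non-overlapping problem within the same ratio and in polynomial time; by Zhang, Cai, et al.~\cite{zc} this forces $P=NP$, which is the claim. (Alternatively one could first apply operation $K$ to pass to the weighted non-overlapping problem and invoke its hardness, but that perturbs the weights and one then has to re-express the bound after accounting for the change in $n$, so the identity embedding is cleaner.)

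The only point that needs a moment of care — and the sole mild obstacle — is verifying that the factor is stated against the same $n$ on both sides of the implication. Because the embedding in the previous step is literally the identity map on instances, there is no size blow-up whatsoever, so the bound $2^{\log^{1-\frac{1}{\log\log^cn}}n}$ proved for the non-overlapping problem applies to the overlapping problem with the very same $n$; had we instead routed through operation $K$, we would first have to check that the resulting weighted instance has size polynomially bounded in that of the original and then propagate the bound through that polynomial relationship, which only weakens the constant $c$ for which it is attainable rather than the form of the bound.
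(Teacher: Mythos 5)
Your main argument is correct, and it is the right way to get this corollary: the single-label problem of Definition~1 is literally a special case of the overlapping problem of Definition~2, the identity embedding changes neither the set of feasible label $s$--$t$ cuts, the optimum, nor the input length $n$, so the $2^{\log^{1-\frac{1}{\log\log^cn}}n}$ lower bound of Zhang, Cai, et al.\ \cite{zc} transfers verbatim. The paper itself does not spell out a proof of this corollary; the pattern it uses for the neighbouring APX-hardness theorem is to apply operation $K$, turning the overlapping instance into a weighted non-overlapping one, and then cite hardness of the latter. Note that this goes in the opposite direction from yours: operation $K$ reduces the overlapping problem \emph{to} the weighted problem, which by itself shows the overlapping problem is no harder than the weighted one, not that it is at least as hard; a hardness transfer needs hard instances mapped \emph{into} the overlapping problem, which is exactly what your identity embedding provides. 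For the same reason, your parenthetical alternative (``first apply operation $K$ \dots and invoke its hardness'') inherits that directional flaw and should not be offered even as a fallback; the clean statement is simply that overlapping instances subsume the unweighted non-overlapping hard instances of \cite{zc}. Your remark about parameter bookkeeping is also sound: with the identity map there is no size blow-up at all, and even under a polynomial blow-up the bound's form survives with only the constant $c$ affected, so the quantitative claim is safe.
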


Due to the breakthrough work on the resolution of $2$-to-$2$ game conjecture \cite{2to2}, one can get a better approximation lower bound than the previous best known result of Vertex Cover Problem \cite{vc}. 
\begin{lemma} \cite{2to2}
It is NP-hard to approximate Vertex Cover problem within a factor of $\sqrt{2}$.
\end{lemma}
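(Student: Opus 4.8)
The plan is to derive this as the Vertex Cover corollary of the resolution of the $2$-to-$2$ Games Conjecture (with imperfect completeness), in two stages: first obtain a strong inapproximability gap for a structured Label Cover problem, and then convert that gap into a Vertex Cover (equivalently, Independent Set) gap of ratio $\sqrt 2$ via a Long-Code gadget. Stage one is the combinatorial core; stage two is a by-now-standard dictatorship-test reduction.

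For stage one, I would start from the NP-hardness of a suitably structured (e.g.\ smooth) Label Cover instance with perfect completeness and arbitrarily small soundness (the PCP theorem together with parallel repetition), and run the reduction of Dinur, Khot, Kindler, Minzer and Safra that maps such an instance to a $2$-to-$2$ game whose constraint graph is essentially the Grassmann graph. The completeness side is immediate; the soundness side is equivalent to a purely combinatorial statement: every subset of the Grassmann graph that is \emph{pseudorandom} — meaning its density inside every sub-Grassmannian obtained by ``zooming in'' on a fixed low-dimensional subspace or ``zooming out'' past a fixed high-dimensional subspace is close to its global density — must have near-perfect edge expansion. Establishing this expansion theorem is the contribution of Khot, Minzer and Safra, and it is the only genuinely difficult step.

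For stage two, I would feed the $2$-to-$2$ hardness (equivalently, the resulting ``$\epsilon$ versus $\tfrac12-\epsilon$'' hardness for Unique Games) into a $p$-biased Long-Code test in the spirit of Dinur--Safra and Austrin--Khot--Safra: replace each variable of the game by a biased Boolean cube and form a graph on the disjoint union of these cubes whose edges encode the folding inside each cube together with the game's consistency constraints. In the completeness case the dictator functions form an independent set whose $p$-biased measure is at least an explicit function of $p$; in the soundness case an influence-decoding argument, combined with an invariance-principle / Gaussian-noise-stability estimate, shows that any independent set of non-negligible measure would decode to a labeling refuting soundness, so the maximum independent set has measure $o(1)$. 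An independent set of measure $a$ in the yes-case versus $o(1)$ in the no-case translates to a Vertex Cover hardness factor $1/(1-a)$; because the completeness guaranteed by the $2$-to-$2$ theorem is only about $\tfrac12$, optimizing the bias $p$ drives $a$ to $1-\tfrac{1}{\sqrt 2}$ rather than to $\tfrac12$, which gives the factor $\sqrt 2$ (and in fact $\sqrt 2-\varepsilon$ for every $\varepsilon>0$, matching the statement).

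The hard part lies entirely in stage one: the Grassmann pseudorandomness-versus-expansion theorem. Its proof goes by induction on the dimension of the Grassmann graph, decomposing a putative non-expanding pseudorandom set according to its restrictions to zoom-in and zoom-out sub-Grassmannians, arguing that the ``not-too-dense'' hypothesis is inherited by almost all such restrictions, and carefully bounding the cross terms that appear on recombining, all supported by estimates on the small eigenspaces of the Grassmann graph. Reproducing that induction is where essentially all the effort sits; by contrast, the reductions in stage two use only classical PCP composition and routine Fourier analysis on the biased cube.
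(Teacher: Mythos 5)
This lemma is not proved in the paper at all: it is imported verbatim from the cited work of Khot, Minzer and Safra \cite{2to2}, so there is no in-paper argument to compare your proposal against. Judged against the literature it cites, your roadmap is an accurate description of how the result is actually established: the Dinur--Khot--Kindler--Minzer--Safra reduction from smooth Label Cover to $2$-to-$2$ games over the Grassmann graph, the Khot--Minzer--Safra expansion theorem for pseudorandom sets in the Grassmann graph as the soundness core, and then a biased long-code / independent-set reduction in the Khot--Regev and Austrin--Khot--Safra style, where the imperfect completeness (Unique Games completeness about $\tfrac12$) forces the optimal bias to $1-\tfrac{1}{\sqrt 2}$ and hence yields the factor $\sqrt 2-\epsilon$ rather than $2-\epsilon$. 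Two caveats: first, what you have written is a correct map of the proof, not a proof --- the Grassmann expansion theorem and the soundness analysis of the gadget are only named, and reproducing them is far beyond what a short argument can do, which is presumably why the paper simply cites the result; second, note that the precise statement one gets (and what the paper should really say) is hardness of approximation within $\sqrt 2-\epsilon$ for every constant $\epsilon>0$, which you correctly flag, so your sketch is if anything slightly more careful than the lemma as stated.
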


\begin{corollary}
Label $s-t$ cut problem with the longest path between $s$ and $t$ is bounded by $2$ cannot be approximated within $(\sqrt{2}-\epsilon)$ for any constant $\epsilon>0$. 
\end{corollary}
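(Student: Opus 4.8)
The plan is to establish the bound by an exact (optimum-preserving) reduction from the Vertex Cover problem to the Minimum Label $s$-$t$ Cut problem restricted to instances in which every simple $s$-$t$ path has length $2$, and then to quote the preceding Lemma: by the resolution of the $2$-to-$2$ games conjecture, Vertex Cover is NP-hard to approximate within $\sqrt{2}-\epsilon$ for every constant $\epsilon>0$. Since the reduction will map an optimum of value $k$ to an optimum of value $k$ on both sides, the inapproximability constant transfers verbatim, and in particular the case $\ell_{max}=2$ of the Label $s$-$t$ Cut problem inherits the $(\sqrt{2}-\epsilon)$ lower bound.

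First I would describe the construction. Given a Vertex Cover instance $H=(V_H,E_H)$, build a graph $G$ on the vertex set $\{s,t\}\cup\{x_e : e\in E_H\}$, where the $x_e$ are fresh, pairwise distinct vertices, one per edge of $H$, and let the label set be $L=V_H$. For every edge $e=\{u,v\}\in E_H$, fix an arbitrary orientation of $e$ and add to $G$ the two edges $sx_e$ and $x_et$, assigning label $u$ to $sx_e$ and label $v$ to $x_et$. The construction is polynomial and each edge of $G$ receives a single label, so the reduction stays inside the non-overlapping version of the problem. Because the gadgets share only the two terminals $s$ and $t$, the only simple $s$-$t$ paths in $G$ are the paths $s,x_e,t$; hence, assuming $E_H\neq\emptyset$ (the case $E_H=\emptyset$ being trivial since then $s$ and $t$ are already separated), the longest $s$-$t$ path in $G$ has length exactly $2$, so $G$ is a legal instance of the restricted problem.

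Next I would verify the exact correspondence between feasible solutions. A label set $S\subseteq L=V_H$ disconnects $s$ from $t$ in $G$ if and only if it destroys every path $s,x_e,t$, i.e.\ if and only if for each edge $e=\{u,v\}$ at least one of $u,v$ lies in $S$; in other words, $S$ is an $s$-$t$ label cut of $G$ exactly when $S$ is a vertex cover of $H$, and the two quantities $|S|$ coincide. Consequently an $\alpha$-approximation algorithm for the Label $s$-$t$ Cut problem with longest $s$-$t$ path bounded by $2$ would yield an $\alpha$-approximation algorithm for Vertex Cover; combined with the preceding Lemma this forces $\alpha\geq\sqrt{2}-\epsilon$ for every constant $\epsilon>0$ unless $P=NP$, which is exactly the claimed statement.

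The reduction is elementary, so there is no deep obstacle; the points that need care are (i) keeping the middle vertices $x_e$ pairwise distinct and the gadgets otherwise vertex-disjoint, so that no simple $s$-$t$ path of length greater than $2$ is inadvertently created, which would violate the $\ell_{max}=2$ restriction, and (ii) checking that the map sending a removed label set to the corresponding covered edge set is a genuine size-preserving bijection between feasible solutions of the two problems, so that the optimum value, and hence the entire hardness gap of Vertex Cover, is carried over with no loss in the constant.
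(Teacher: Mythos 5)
Your proposal is correct and is essentially the argument the paper intends: the paper leaves this corollary as an immediate consequence of the quoted $\sqrt{2}$-hardness of Vertex Cover \cite{2to2} together with the standard optimum-preserving reduction (one length-$2$ $s$-$t$ path per edge, labeled by its two endpoints) that makes Label $s$-$t$ Cut with $\ell_{max}=2$ exactly as hard to approximate as Vertex Cover, which is precisely the construction you spell out.
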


\begin{corollary}
If the longest path between $s$ and $t$ is bounded by $2$, it is NP-hard to approximate Minimum $s-t$ Label Cut problem with Label Overlaps within a factor of $\sqrt{2}$. 
\end{corollary}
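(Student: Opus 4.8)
The plan is to prove the bound by a direct, approximation-preserving reduction from Vertex Cover, using the $\sqrt{2}$-inapproximability of Vertex Cover (the lemma above, from \cite{2to2}). Passing through the APX-hardness argument or through operation $K$ would only reproduce a bound of the form $(\sqrt{2}-\epsilon)$; a direct reduction is what lets the exact constant $\sqrt{2}$ carry over.

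First I would set up the reduction. Given a Vertex Cover instance $H=(V_H,E_H)$, build a graph $G$ with two distinguished vertices $s,t$ and label set $L=\{\ell_w : w\in V_H\}$: for each edge $\{u,v\}\in E_H$ add a fresh internal vertex $x_{uv}$ together with the two edges $sx_{uv}$ and $x_{uv}t$, labeling $sx_{uv}$ with $\ell_u$ and $x_{uv}t$ with $\ell_v$. Because $x_{uv}$ is adjacent only to $s$ and $t$, and $s,t$ are themselves non-adjacent, every simple $s-t$ path in $G$ has length exactly $2$, i.e. $\ell_{max}=2$. Every edge of $G$ carries a single label, so $G$ is in particular a legitimate instance of Minimum Label $s-t$ Cut with Label Overlaps; we only need that the overlapping problem subsumes the non-overlapping one.

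Next I would verify the exact solution correspondence. A label set $S\subseteq L$ disconnects $s$ from $t$ in $G$ if and only if, for every $\{u,v\}\in E_H$, the two-edge path through $x_{uv}$ is destroyed, that is, $\ell_u\in S$ or $\ell_v\in S$; hence $S$ is an $s-t$ label cut exactly when $\{w : \ell_w\in S\}$ is a vertex cover of $H$, and the cardinalities agree. The correspondence is constructive in both directions and polynomial-time computable, so the minimum $s-t$ label cut of $G$ has size equal to the minimum vertex cover of $H$, and any feasible label cut of size $k$ yields a vertex cover of size $k$. Therefore a polynomial-time $\sqrt{2}$-approximation for Minimum Label $s-t$ Cut with Label Overlaps restricted to instances with $\ell_{max}=2$ would give a polynomial-time $\sqrt{2}$-approximation for Vertex Cover, contradicting the $\sqrt{2}$-hardness of Vertex Cover from \cite{2to2} unless $P=NP$.

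The step I expect to require the most care is confirming that the reduction is gap-preserving with no slack, so that the two optima are genuinely equal rather than merely comparable: one checks that $G$ has no $s-t$ path of length $1$ (there is none, since $s$ and $t$ share no edge) and that no label outside $\{\ell_w : w\in V_H\}$ ever occurs, so that deleting a label can only help in the vertex-cover sense. Granting this, the constant $\sqrt{2}$ transfers verbatim, which is exactly the assertion of the corollary; the weaker $(\sqrt{2}-\epsilon)$ bound of the preceding corollary also follows a fortiori, since the overlapping problem generalizes the non-overlapping problem treated there.
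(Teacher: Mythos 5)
Your proof is correct and is essentially the argument the paper intends but leaves implicit: the corollary is meant to follow from the $\sqrt{2}$ Vertex Cover hardness of \cite{2to2} via the standard cost-preserving reduction (the one underlying the $l_{max}=2$ no-PTAS result of \cite{zc}), and your construction with a fresh length-two $s$--$t$ path per edge, labeled by its two endpoints, is exactly that reduction, with the observation that single-label edges form a legitimate instance of the overlapping problem. The only caveat, inherited from the paper's own phrasing of the lemma, is that the Khot--Minzer--Safra result is properly a hardness of $\sqrt{2}-\epsilon$ for every constant $\epsilon>0$, so strictly the corollary transfers in that form rather than at the exact constant $\sqrt{2}$.
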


\begin{corollary}
Minimum Label Cut problem with Label Overlaps can not be approximated within $(1-o(1))\ln n$ unless $P=NP$.
\end{corollary}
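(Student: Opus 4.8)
The plan is to reduce the Set Cover problem to the Minimum Label Cut problem with Label Overlaps in an approximation-preserving way, and then apply the Dinur--Steurer theorem: for every $\varepsilon>0$, Set Cover on a universe of size $n$ admits no polynomial-time $(1-\varepsilon)\ln n$-approximation unless $\mathrm{P}=\mathrm{NP}$. This sharpens the $\mathrm{NP}$-hardness reduction of Farag\'{o} \cite{fa} into an inapproximability statement.

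Given a Set Cover instance with universe $U=\{u_1,\dots,u_n\}$ and sets $S_1,\dots,S_m$, I would build a simple edge-labeled graph $G$ as follows. Use fresh labels $\ell_1,\dots,\ell_m$, one per set. Take element vertices $x_1,\dots,x_n$ together with a few padding vertices so that the $x$-side has $n+2$ vertices, and another $n+2$ vertices $y_0,\dots,y_{n+1}$; place a clique on the $x$-side and a clique on the $y$-side, giving each of these clique edges a private label that occurs on no other edge; and for each $i$ add the single edge $x_iy_0$ carrying exactly the labels $\{\ell_j : u_i\in S_j\}$. Then $|V(G)|=2n+O(1)$ and the construction is polynomial, so $\ln|V(G)|=(1+o(1))\ln n$. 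The core claim to establish is that the optimum label cut of $G$ equals the optimum of the Set Cover instance, via the correspondence $T\mapsto\{S_j:\ell_j\in T\}$: one direction is immediate (a cover $C$ yields the label cut $\{\ell_j:S_j\in C\}$, which deletes every edge $x_iy_0$ and hence separates the two cliques), and for the other direction I would use that each clique has edge-connectivity $n+1$, which exceeds the trivial bound $n$ on the Set Cover optimum, so any label set of size at most that optimum leaves both cliques connected and therefore, to disconnect $G$, must delete all the edges $x_iy_0$---which forces the corresponding sets to cover $U$.

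Chaining the two halves gives that a polynomial-time $(1-o(1))\ln|V(G)|$-approximation for Minimum Label Cut with Label Overlaps would yield a polynomial-time $(1-o(1))\ln n$-approximation for Set Cover, hence $\mathrm{P}=\mathrm{NP}$, which is exactly the corollary. The step I expect to require the most care is this lower-bound direction: one must rule out every ``local'' way of disconnecting $G$---isolating a single vertex, or splitting a clique---being cheaper than a minimum cover, which is precisely what dictates the clique padding and the choice of clique sizes strictly above $\mathrm{OPT}\le n$. A secondary point, needed for the sharp constant rather than a mere $\Omega(\ln n)$ bound, is that the reduction inflates the instance size only by a constant factor, so that $\ln$ of the number of vertices of $G$ is $(1+o(1))\ln n$.
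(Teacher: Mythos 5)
Your construction is sound and, after checking the details, it does what you claim: with cliques of size $n+2$ carrying private labels, any label set of size at most $\mathrm{OPT}_{SC}\le n$ cannot disconnect either clique internally (that would cost at least $n+1$ private labels), so a minimum label cut must delete every edge $x_iy_0$ and hence reads off a set cover, giving $\mathrm{OPT}_{LC}=\mathrm{OPT}_{SC}$ while $\ln|V(G)|=(1+o(1))\ln n$. This is in spirit the paper's intended route --- the paper gives no explicit proof of this corollary and simply leans on the Set Cover reduction behind Farag\'{o}'s NP-hardness theorem \cite{fa} together with the tight $\ln n$ inapproximability of Set Cover --- but you supply a concrete, self-contained approximation-preserving gadget, which is a genuine service since the cited reduction's details are not reproduced in the paper and since the ``unless $P=NP$'' form of the statement really does require the Dinur--Steurer strengthening rather than Feige's quasi-NP hardness. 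One point to tighten in your final chaining step: an \emph{approximate} label cut returned by the hypothetical algorithm need not have size at most $n$, and if it exceeds $n$ it may disconnect the graph by splitting a clique (e.g.\ isolating a padding vertex with its $n+1$ private labels), in which case the labels $\ell_j$ it contains need not form a cover. This is easily repaired: either observe that in that regime a trivially computable minimal cover of size at most $n<|T|\le(1-o(1))\ln|V|\cdot\mathrm{OPT}$ already achieves the forbidden ratio, or avoid solution-mapping altogether by using the gap form of the Set Cover hardness, since your equality $\mathrm{OPT}_{LC}=\mathrm{OPT}_{SC}$ transfers the gap directly. With that caveat addressed, your argument is complete.
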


\section{Submodularity}

\begin{figure} \label{fig3}
	\centering
	\includegraphics[width=0.4\linewidth]{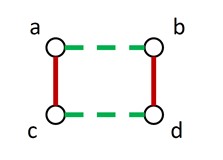}
	\caption{Hedge Cut Example}
	\label{fig:hedge}
\end{figure}

Figure 3. is the example in paper \cite{gkp} to argue the hedge cut function is not submodular. In that example, $f(\{a,b\}-f(\{a\})=1-2< 2-1=f(\{a,b,c\})-f(\{a,c\}).$ 

In telecommunication networks, Shared Risk Resource Group(SRG) is exactly the problems studied above. An SRG failure makes multiple circuits go down because of the failure of a common resource those networks share. There are different shared risk groups: Shared risk link group (SRLG) and Shared risk node group (SRNG). The paper \cite{gkp} was studying the SRLG problem, however, the example it showed was actually SRNG. 

A more proper way to define the hedge cut function is by the number of labels of the cut edge set. Given a subset $E'\subseteq E$ of edges, let $g(E')$ to be the number of labels appeared in $E'$. The minimum hedge cut $E'$: 
$$\min\{g(E')|E'\subseteq E, G(V, E\setminus E')~disconnected\}.$$

\begin{property}
	Let $E'\subseteq E$ and $E''\subseteq E$, the function of the number of labels in Minimum Label Cut Problem is submodular:
	
	$$g(E')+g(E'')\geq g(E'\cup E'')+g(E'\cap E'').$$
\end{property}

Above properties also hold in minimum label cut problem with label overlaps. 

\begin{property}
The Minimum Label Cut with Label Overlaps of a graph is no larger than $\min\{D_L(v_i)\}$, where $i=1,2,...,n.$
\end{property}

\begin{property}
Let $E'\subseteq E$ and $E''\subseteq E$, the function of the number of labels in Minimum Label Cut Problem with Label Overlaps is submodular:

$$g(E')+g(E'')\geq g(E'\cup E'')+g(E'\cap E'').$$

\end{property}

\section{Further Discussion and Open Problems}

There are some interesting open problems for further research. One may try different approaches for the minimum label cut problem by streaming, graph sparsification method, or distributed algorithm. It is also interesting to study the problem with special hedge structures such as tree, path, cycle. The Menger's type theorem for this problem is still missing, it is open whether Menger's type theorem exists for the minimum label cut problem. Moreover, those results may have connections with other cut problems such as Label Multicut, Label Multiway Cut, and Label $k$-Cut. Last but not least, it is still open whether the (global) minimum label cut(nonoverlapping version) problem is polynomial-time solvable or not.

\paragraph{\textbf{Acknowledgements}}

Rupei Xu would like to thank Dr. Debmalya Panigrahi for kind instructions and helpful discussions on graph connectivity topic during her visit to Simons Institute for the Theory of Computing of UC Berkeley. Dr. Debmalya Panigrahi's talks about hedge connectivity in the  9th Workshop on Flexible Network Design and 2018 INFORMS Annual Meeting inspired Rupei Xu to start this research. Dr. Kyle J. Fox and Dr. L\'{a}szl\'{o} A. V\'{e}gh also helped to discuss this problem.

\end{document}